\newtheorem{theorem}{\bf Theorem}
\newtheorem{proposition}{\bf Proposition}
\newcommand{\Rmnum}[1]{\expandafter\@slowromancap\romannumeral #1@}
\begin{document}
\title{Feasibility of Using Discriminate Pricing Schemes for Energy Trading in Smart Grid}
\author{\IEEEauthorblockN{Wayes Tushar\IEEEauthorrefmark{1}, Chau Yuen\IEEEauthorrefmark{1}, Bo Chai\IEEEauthorrefmark{2}, David B. Smith\IEEEauthorrefmark{3}, and  H. Vincent Poor\IEEEauthorrefmark{4}}
\IEEEauthorblockA{\IEEEauthorrefmark{1}Singapore University of Technology and Design, Singapore 138682. Email: \{wayes\_tushar, yuenchau\}@sutd.edu.sg. \\\IEEEauthorrefmark{2}State Key Lab. of Industrial Control Technology, Zhejiang University, China. Email: chaibozju@gmail.com.\\
\IEEEauthorrefmark{3}NICTA, Canberra, ACT, Australia. Email: david.smith@nicta.com.au.\\
\IEEEauthorrefmark{4}School of Engineering and Applied Science, Princeton University, Princeton, NJ, USA. Email: poor@princeton.edu.}
\thanks{This work is supported by the Singapore University of Technology and Design (SUTD) under the Energy Innovation Research Program (EIRP) Singapore NRF2012EWT-EIRP002-045.}
\thanks{\IEEEauthorrefmark{3}David B. Smith is also with the Australian National University (ANU), and his work is supported by NICTA. NICTA is funded by the Australian Government through the Department of Communications and the Australian Research Council through the ICT Centre of Excellence Program.}
}
\IEEEoverridecommandlockouts
\maketitle
\begin{abstract}
This paper investigates the feasibility of using a \emph{discriminate} pricing scheme to offset the inconvenience that is experienced by an energy user (EU) in trading its energy with an energy controller in smart grid. The main objective is to encourage EUs with small distributed energy resources (DERs), or with high sensitivity to their inconvenience, to take part in the energy trading via providing incentive to them with relatively higher payment at the same time as reducing the total cost to the energy controller. The proposed scheme is modeled through a two-stage Stackelberg game that describes the energy trading between a shared facility authority (SFA) and EUs in a smart community. A suitable cost function is proposed for the SFA to leverage the generation of discriminate pricing according to the inconvenience experienced by each EU. It is shown that the game has a unique sub-game perfect equilibrium (SPE), under the certain condition at which the SFA's total cost is minimized, and that each EU receives its best utility according to its associated inconvenience for the given price. A backward induction technique is used to derive a closed form expression for the price function at SPE, and thus the dependency of price on an EU's different decision parameters is explained for the studied system. Numerical examples are provided to show the beneficial properties of the proposed scheme.
\end{abstract}
\begin{IEEEkeywords}
Smart grid, discriminate pricing, distributed energy resources, game theory, energy management.
\end{IEEEkeywords}
 \setcounter{page}{1}
\section{Introduction}\label{sec:introduction}
Energy management (or demand management) is a technique that changes the electricity usage patterns of end users in response to the changes in the price of electricity over time~\cite{Albadi-JEPSR:2008,Liu-STSP:2014}. With the advancement of distributed energy resources (DERs), the technique can also be used to assist the grid or other energy controllers such as a shared facility authority (SFA)~\cite{Tushar-ISGT:2014} to operate reliably and proficiently by supplying energy to them~\cite{Tushar-TSG:2014}. %

The majority of energy management literature focuses mainly on three different pricing schemes: time-of-use pricing; day-ahead pricing; and real-time pricing~\cite{Yi-TSG:2013}. Time-of-use pricing~\cite{Asano-TPS:1992} has three different pricing rates: peak, off-peak and shoulder rate based on the use of electricity at different times of the day. Day-ahead pricing~\cite{Torre-TPS:2002} is, in principle, determined by matching offers from generators to bids from energy users (EUs) so as to develop a classic supply and demand equilibrium price at an hourly interval. Finally, real-time pricing~\cite{Yi-TSG:2013} refers to tariffed retail charges for delivering electric power and energy that vary hour-to-hour, and are determined from wholesale market prices using an approved methodology. Other popular dynamic pricing schemes include critical peak pricing, extreme day pricing, and extreme day critical peak pricing~\cite{Yi-TSG:2013}. It is important to note that in all of the above mentioned pricing schemes all EUs are charged at the same rate at any particular time.

Due to government subsidies to encourage the use of renewables~\cite{Fischer-JEEM:2008}, more EUs with DERs are expected to be available in smart grid. This will lead to a better completion of a purchasing target for an energy controller, and thus more saving from its buying cost. Particularly, for energy controllers such as an SFA that relies on the main grid as its primary source of energy~\cite{Tushar-ISGT:2014}, the opportunity for trading energy with EUs can greatly reduce their dependency, and consequently decrease their cost of energy purchase~\cite{Tham-JTSMCS:2013}. Nevertheless, not all EUs would be interested in trading energy with the energy controller if the benefit is not attractive~\cite{Naveed-Energies:2013}. This can precisely happen to EUs with merely limited energy capacity, or to EUs that are highly sensitive to the inconvenience caused by the trading of energy whose expected return could be very small. In this case, the EUs would store the energy or change its consumption schedule rather than selling it to the energy controller~\cite{Tushar-TIE:2014}. However, one possible way to address this is to pay them a relatively higher price per unit of energy, compared to the EUs with very large DERs, without affecting their revenue significantly. In fact, allowing discriminate pricing not only considerably benefits EUs with lower energy capacity without significantly affecting others, as we will see shortly, but also benefits the SFA by reducing its total cost of energy purchase when adopting this flexible pricing.
\begin{table}
\caption {Numerical example of a discriminate pricing scheme where an SFA requires $40$ kWh of energy from two EUs and the SFA's total price per unit of energy to pay to the EUs is $40$ cents/kWh.}
\centering
\begin{tabular}{|c||c|c|}
\hline
& \textbf{Case 1} & \textbf{Case 2}\\
\hline
Payment to EU1 (cents/kWh) & $20$ & $18$\\
\hline
Payment to EU2 (cents/kWh)& $20$ & $22$\\
\hline
Energy supplied by EU1 (kWh) & $35$ & $32$\\
\hline
Energy supplied by EU2 (kWh) & $5$ & $8$\\
\hline
Revenue of EU 1 (cents) & $700$ & $576$ (-$17\%$)\\
\hline
Revenue of EU 2 (cents) & $100$ & $176$ (+$76\%$)\\
\hline\hline
\textbf{Cost to the SFA} (cents) & $800$ & $752$ (-$6\%$)\\
\hline
\end{tabular}
\label{table:motivation}
\end{table}

For instance, consider the numerical example given in Table \ref{table:motivation} where the SFA buys its required $40$ kWh energy from EU1 and EU2. EU1 has $50$ kWh and EU2 has $10$ kWh of energy to sell to the SFA. In case 1, the SFA pays the same price $20$ cents/kWh to each of them, and EU1 and EU2 sell $35$ and $5$ kWh respectively to the SFA. Hence, the revenues of EU1 and EU2 are $700$ and $100$ cents respectively, and the total cost to the SFA is $800$ cents. In case 2, the SFA uses discriminate pricing to motivate EU2 to sell more to the SFA. Therefore, it pays $22$ cents/kWh to the EU2 and $18$ cents/kWh to EU1. Now, due to this increment of price EU2 increases its selling amount to $8$ kWh, and the SFA procures the remaining $32$ kWh from EU1. Therefore, the revenues changes to $576$ and $176$ cents for EU1 and EU2 respectively, and total cost to the SFA reduces to $752$ cents. Thus, from this particular example it can be argued that discriminate pricing can be considerably beneficial to EUs with small energy (revenue increment is $76\%$) in expense of relatively lower revenue degradation (e.g., $17\%$ in the case of EU1) from EUs with larger DERs. It also reduces the cost to the SFA by $6\%$. Therefore, discriminate pricing is advantageous for reducing SFA's cost and also for circumstances where the SFA motivates the participation of EUs with both large and small DERs in the energy trading. Hence, there is a need for investigation as to how this pricing scheme can be adopted in a smart grid environment.

To this end, we take \emph{the first step} towards discussing the properties of a discriminate pricing scheme. The idea of discriminate pricing was first used to design a consumer-centric energy management scheme in \cite{Tushar-TSG:2014}. However, no insight was provided into the choice of different prices that are paid to different EUs. In this paper, we first propose a scheme by using a two-stage Stackelberg game. In the proposed scheme, the EUs with smaller energy generation can expect higher unit selling price, and the price is adaptive to their available energy for sale and their sensitivity to the inconvenience of energy exchange. At the same time, the scheme is designed to minimize the total purchasing cost to the energy controller whereas each EU also receives its best utility based on its available energy, its sensitivity to the inconvenience, and the offered price by the SFA. We prove the existence of a solution to the proposed game, and use a backward induction method to determine how the unit price set by the energy controller is affected by an EU's various parameters. We further derive a closed form expression for differing price generation considering some conditions on the energy controller's cost function. Finally, we present some numerical cases to show the properties of the  proposed discriminate pricing scheme.

We stress that current grid systems do not allow such discriminate pricing among EUs. However, we envision it as a further addition to real-time pricing schemes in future smart grid. Examples of such differentiation can also be found in standard Feed-in-Tariff (FIT) schemes~\cite{2012-solarchoice}.

\section{System Description and Problem Formulation}\label{sec:system-model}
\begin{figure}[b!]
\centering
\includegraphics[width=\columnwidth]{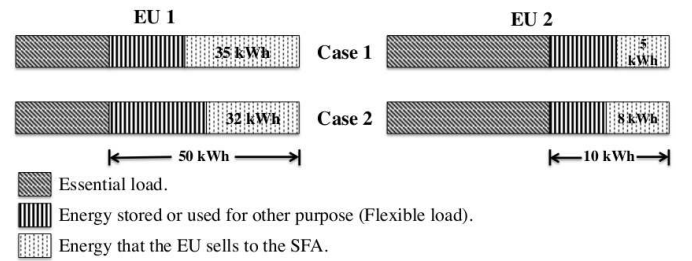}
\caption{Example of how EUs are sensitive to the inconvenience caused by a change of price, which thus affects their amount of energy to trade with the SFA.} \label{fig:example}
\end{figure}
Consider a smart community consisting of a large number of EUs, an SFA and the main electric grid. Each EU can be a single user, or group of users connected via an aggregator that acts as a single entity~\cite{Fang-J-CST:2012}.  EUs are equipped with distributed energy resources (DERs) such as wind turbines and solar arrays. They can sell their energy, if there is any remaining after meeting their essential loads, to the SFA or to the main grid to make some extra revenue. Since the grid's buying price is significantly low in general~\cite{McKenna-JIET:2013}, it is reasonable to assume that each EU would be more interested in selling its energy to the SFA instead of selling to the grid. Alternatively, an EU can store its energy or schedule its equipment instead of selling the energy to the SFA if the return benefit is not attractive, i.e., if the price is not convenient enough for the EU to trade its energy. We briefly explain this phenomenon  by an example in Fig.~\ref{fig:example}.

In Fig.~\ref{fig:example}, we use the same example of Table~\ref{table:motivation} and show how sensitive an EU is to the inconvenience of trading energy caused by the change of price per unit of energy. As can be seen from the figure, EU1 has considerably lower essential load than EU2, and thus has a larger available energy to supply to the SFA. As a result, in case 1, EU1 supplies $35$ kWh of energy to the SFA whereas EU2 supplies $5$ kWh for the same per unit price of $20$ cents/kWh after using the energy for their other flexible loads. However, in case 2, the SFA adopts a discriminate pricing scheme and changes the per unit price to $18$ cents/kWh and $22$ cents/kWh to pay to EU1 and EU2 respectively. Due to the change of price, the expected return for EU2 becomes larger from trading its energy at the expense of the revenue degradation from EU1. Consequently, energy trading becomes more inconvenient for EU1 where as at the same time it becomes more appealing for EU2. As shown in Fig.~\ref{fig:example}, due to their sensitivities to the inconvenience caused by the change of price, EU1 reduces its amount of energy for selling to $32$ kWh (i.e., by increasing its use of the remaining available energy for other purposes such as storage) whereas EU2 increases its amount of energy for selling to $8$ kWh in case 2. In this paper, we quantify this sensitivity of each EU to the relative inconvenience through an \emph{inconvenience parameter\footnote{Where a higher and lower value of this parameter refers to the higher and lower sensitivity of an EU respectively to the inconvenience caused by energy trading.}}, as we will see shortly, and  analyze its effects on the total cost to the SFA. The SFA refers to an energy controller\footnote{For the rest of this paper, we will use SFA to indicate an energy controller as discussed in Section~\ref{sec:introduction}.} that controls the electricity consumed by the equipment and machines that are shared and used by EUs on daily basis. The SFA does not have any energy generation capacity, and therefore depends on EUs and the main grid for its required energy. The SFA is connected to the main grid and all EUs via power and communication lines~\cite{Fang-J-CST:2012}.

To this end, let us assume that $N$ EUs in a set $\mathcal{N}$ are taking part in energy trading with the SFA. At a particular time of the day, the SFA's energy requirement is $E_r$, and each EU $i\in\mathcal{N}$ has an available energy of $E_i$ after meeting its essential load from which it can sell $e_i$ to the SFA. The main objective of each EU $i$ is to make some extra revenue by selling $e_i$ to the SFA at a price $c_i$ per unit of energy. However, the choice of $e_i$ is reasonably affected by the inconvenience parameter $\alpha_i$, which is a measure of sensitivity of EU $i$ to the inconvenience it faces to trade its energy. In this regard, we define a utility function $U_i$ for each EU $i$ that captures the effect of this inconvenience, and is assumed to possess the following properties:
\begin{enumerate}[i)]
\item The utility function is an increasing function of $e_i$ and $c_i$, and a decreasing function of inconvenience parameter $\alpha_i$. That is $\frac{\delta U_i}{\delta e_i}, \frac{\delta U_i}{\delta c_i}>0$, and $\frac{\delta U_i}{\delta \alpha_i}<0$. $\alpha_i$ captures the fact that the utility will decrease for an EU if its sensitivity to the inconvenience of trading energy increases.
\item The utility function is a concave function of $e_i$, i.e., $\frac{\delta^2U_i}{\delta {e_i}^2}<0$. Therefore, the utility can become saturated or even decrease with an excessive $e_i$. This can be interpreted by the fact that since EUs with DERs are equipped with a battery with limited capacity in general, excessive supply of energy once exceeding a certain limit would risk the depletion of battery due to the aging effect upon the battery, and consequently decrease the EU's utility.
\end{enumerate}
Formally, we define $U_i~\forall i$ as
\begin{eqnarray}
U_i = e_ic_i + (E_i-\alpha_i e_i)e_i.
\label{eqn:utility}
\end{eqnarray}
In \eqref{eqn:utility}, $e_ic_i$ is the direct income that the EU $i$ receives from selling its energy to the SFA at a price $c_i$ per unit of energy. $(E_i-\alpha_i e_i)e_i$ refers to the possible loss  for the EU's inconvenience sensitivity $\alpha_i>0$. Different values of $\alpha_i$ reflect different negative impacts of energy supply on an EU's utility, and an EU can set higher $\alpha_i$ if it prefers to sell less. For example, the effects of $c_i$ and $\alpha_i$ on an EU's utility from its energy trading is shown in Fig.~\ref{fig:effect-utility}.  Now, with the goal of maximizing utility, the objective of each EU can be expressed as
\begin{eqnarray}
\max_{e_i}\left[e_ic_i + (E_i - \alpha_ie_i)e_i\right].\label{eqn:obj-eu}
\end{eqnarray}
\begin{figure}[t!]
\centering
\includegraphics[width=\columnwidth]{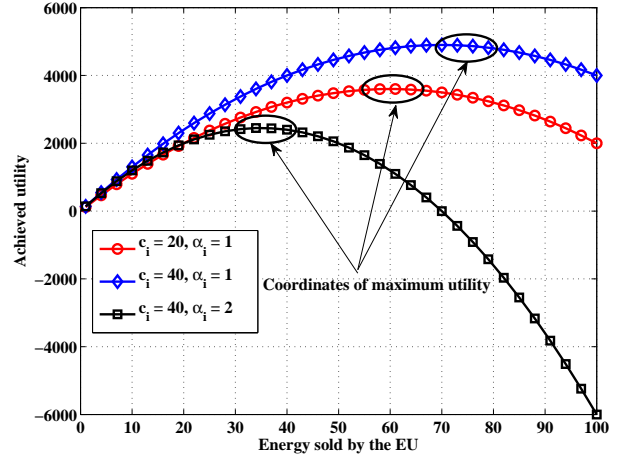}
\caption{Effect of parameters such as $\alpha_i$ and $c_i$ on the achieved utility of an EU are shown in this figure. As can be seen in the figure, for the same $\alpha_i$ a higher $c_i$ encourages an EU to sell more to the SFA and thus the maximum utility shifts towards a higher value on the right.  By contrast, a higher $\alpha_i$ causes more inconvenience to the EU which can lead the utility even to a negative value (i.e., cost) for greater energy trading.} \label{fig:effect-utility}
\end{figure}

However, as a buyer of energy, the SFA wants to minimize its total cost $J$ of energy purchase from EUs and the grid. In this paper, we consider the following cost function to capture the total cost to the SFC for buying its required energy from EUs and the grid:
\begin{eqnarray}
J = \sum_{i=1}^N \left(e_ic_i^k + a_ic_i + b_i\right)+c_g\left(E_r - \sum_i e_i\right),\label{eqn:cost}
\end{eqnarray}
such that
\begin{eqnarray}
\sum_i c_i \leq C,~c_\text{min}\leq c_i\leq c_\text{max},\label{eqn:const-sfa}
\end{eqnarray}
where $C$ is the total unit energy price~\cite{Tushar-TSG:2014}, and $c_\text{min}$ and $c_\text{max}$ are the lower and upper limits of unit price that the SFA can pay to any EU~\cite{Tushar-TSG:2014}. In \eqref{eqn:cost}, $e_ic_i^k$ corresponds to the direct cost $e_ic_i$ that is weighted by $c_i^{k-1}$ to generate discriminate prices for EUs with different $\alpha_i$, and the term $(a_ic_i + b_i),~a_i,b_i>0$ accounts for other costs such as transmission cost and store of purchased energy cost~\cite{Tushar-TSG:2014}. $c_g\left(E_r - \sum_i e_i\right)$ is the cost of purchasing energy from the grid. $C$ scales a set of normalized prices to generate the unit price $c_i$. It is fixed for a particular time and can be determined by the SFA using any real-time price estimator, e.g., the estimator proposed in~\cite{2008IEEE-JTPS_Yun}.

Now, the SFA's objective is to set a price $c_i$ per unit of energy for each EU $i$ that not only minimizes its total cost in \eqref{eqn:cost} but also pays a price to each EU according to their inconvenience parameters, and thus encourages them to take part in energy trading with the SFA. Therefore, the objective of the SFA can be defined as
\begin{eqnarray}
\min_{c_i}\left[\sum_{i=1}^N \left(e_ic_i^k + a_ic_i + b_i\right)+c_g\left(E_r - \sum_i e_i\right)\right],\label{eqn:obj-sfa}
\end{eqnarray}
such that \eqref{eqn:const-sfa} is satisfied.

We stress that \eqref{eqn:obj-eu} and \eqref{eqn:obj-sfa} are related via $e_i$ and $c_i$, and can be solved in a centralized fashion. However, considering that the nodes in the system are distributed, it is more advantageous to define a solution approach that can be implemented distributedly according to the parameter setting within the system~\cite{Rad-JTSG:2010}. In this regard, we propose to use a game theoretic formulation. In \cite{Tushar-TSG:2014}, the effect of changing $C$ on the cost to a seller was investigated, and a distributed algorithm was proposed to design a consumer-centric smart grid via capturing this effect. In this paper, we focus on exploring the influence of different EUs' behavior on the choice of price by the SFA, and the resultant cost incurred to it. To that end, we propose a two-stage Stackelberg game in the next section.
\section{Two-Stage Stackelberg Game}\label{sec:stackelberg-game}
To determine energy trading parameters $e_i$ and $c_i$, on the one hand, each EU $i$ needs to decide on the amount of energy $e_i$ that it wants to sell to the SFA according to its inconvenience sensitivity and the offered price. On the other hand, based on the amount of energy offered by each EU and its inconvenience parameter, the SFA agrees on the price vector $\mathbf{c} = \left[c_1, c_2, \hdots, c_N\right]$ that it wants to pay to each EU such that the cost $J$ to the SFA is minimized. Thereupon, this sequential interaction can be modeled as a two-stage Stackelberg game~\cite{Bu-TEPC:2013}, which is formally defined as
\begin{eqnarray}
\Omega = \{(\mathcal{N}\cup\{\text{SFA}\}), \{\mathbf{E}_i\}_{i\in\mathcal{N}}, \{U_i\}_{i\in\mathcal{N}}, J, \mathbf{c}\}.\label{eqn:game-formal}
\end{eqnarray}
In \eqref{eqn:game-formal}, $(\mathcal{N}\cup\{\text{SFA}\})$ is the set of total players in the game where each EU $i\in\mathcal{N}$ is a follower, and $\{\text{SFA}\}$ is the leader. $\mathbf{E}_i$ is the strategy vector of each follower $i$ and $U_i$ is the utility that the follower $i$ receives from choosing its strategy $e_i\in\mathbf{E}_i$. $J$ is the cost incurred to the SFA for choosing the strategy vector $\mathbf{c}$.

As the leader of $\Omega$, the SFA chooses its strategy vector $\mathbf{c}$ in the first stage of the game such that its cost function in \eqref{eqn:cost} is minimized, and the constraints in \eqref{eqn:const-sfa} are satisfied. In the second stage of the game, each EU $i\in\mathcal{N}$ independently chooses $e_i$ in order to maximize its utility in \eqref{eqn:utility} in response to $c_i$ chosen by the SFA. Consequently, $\Omega$ reaches the equilibrium solution of the game.
\subsection{Solution Concept}
A general solution of a multi-stage Stackelberg game such as the proposed $\Omega$ is the sub-game perfect equilibrium (SPE)~\cite{Bu-TEPC:2013}. A common method to determine the SPE of a Stackelberg game is to adopt a backward induction technique that captures the sequential dependencies of decisions between stages of the game~\cite{Bu-TEPC:2013}. To that end, we first analyze how each EU would maximize its benefit by playing its best response to the price offered by the SFA in stage two. Then, we explore how the SFA decides on different prices to pay to different EUs according to their offered energy and inconveniences. We note that, due to the method for game formulation, $\Omega$ will possess a SPE if there exists a solution in both stages of the decision making process by the SFA and EUs. In fact, the existence of a solution in pure strategies is not always guaranteed in a game~\cite{ChaiBo-TSG:2014}, and hence there is a need to investigate the existence of a solution in the proposed $\Omega$.
\begin{theorem}\label{thm:theorem-1}
A unique SPE exists for the proposed two-stage Stackelberg game $\Omega$ if $k=2$ in \eqref{eqn:cost}.
\end{theorem}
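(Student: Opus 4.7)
The plan is to apply backward induction, analyzing Stage 2 (the EUs' best responses) first and then Stage 1 (the SFA's optimization), and to show that in each stage the player's optimization admits a unique solution. This will establish both existence and uniqueness of the SPE.

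For Stage 2, I would fix an arbitrary price vector $\mathbf{c}$ satisfying \eqref{eqn:const-sfa} and study each EU's problem \eqref{eqn:obj-eu}. Since $U_i(e_i) = -\alpha_i e_i^2 + (c_i+E_i)e_i$ is a strictly concave quadratic in $e_i$ (the second derivative equals $-2\alpha_i<0$ by assumption), the first-order condition gives a unique best response
\begin{equation}
e_i^{\star}(c_i) = \frac{c_i + E_i}{2\alpha_i}.\label{eqn:bestresp}
\end{equation}
This step is routine and does not depend on $k$.

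For Stage 1, I would substitute \eqref{eqn:bestresp} into the cost $J$ defined in \eqref{eqn:cost} to obtain a function $\tilde{J}(\mathbf{c})$ depending only on $\mathbf{c}$. With $k=2$, the summand becomes $\frac{c_i+E_i}{2\alpha_i}(c_i^{2}-c_g)+a_i c_i + b_i$ plus the constant $c_g E_r$, so $\tilde{J}$ is separable across $i$. I would then compute $\partial^{2}\tilde{J}/\partial c_i^{2} = (3c_i+E_i)/\alpha_i$ and note that cross partials vanish; under \eqref{eqn:const-sfa} we have $c_i\ge c_{\min}>0$ and $E_i\ge 0$, so the Hessian is diagonal and strictly positive definite, giving strict convexity of $\tilde{J}$ in $\mathbf{c}$. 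Since the feasible set in \eqref{eqn:const-sfa} is a (nonempty) compact convex polytope, strict convexity implies the existence of a unique minimizer $\mathbf{c}^{\star}$.

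Combining the two stages, the profile $\bigl(\mathbf{c}^{\star},\{e_i^{\star}(c_i^{\star})\}_{i\in\mathcal{N}}\bigr)$ is the unique SPE of $\Omega$. The main obstacle is the Stage-1 convexity argument: for general $k$ the substituted cost $\tilde{J}$ contains terms of the form $c_i^{k+1}/\alpha_i$, and only for $k=2$ is the resulting univariate polynomial in each $c_i$ strictly convex on the entire feasible interval without additional assumptions on $E_i$, $\alpha_i$, or the price bounds. I would therefore emphasize in the proof exactly where the hypothesis $k=2$ is invoked, and verify that the separability across $i$ (which hinges on the linearity of the coupling term $c_g\sum_i e_i$ in $\mathbf{c}$ after substitution) is what makes the diagonal Hessian suffice for a global uniqueness conclusion.
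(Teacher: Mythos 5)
Your proposal follows the paper's own route exactly: backward induction, strict concavity of $U_i$ in $e_i$ for Stage~2, and strict convexity of the leader's cost in $\mathbf{c}$ for Stage~1 when $k=2$, yielding a unique optimizer at each stage and hence a unique SPE. If anything, your Stage-1 step is more careful than the paper's, which simply asserts that \eqref{eqn:cost} is strictly convex in $c_i$ ``for the amount of energy offered by each EU,'' whereas you substitute the follower best response $e_i^{\star}(c_i)$ first and verify the positive diagonal Hessian $(3c_i+E_i)/\alpha_i$ of the reduced cost --- the object that actually governs the leader's problem and that matches the paper's subsequent first-order condition \eqref{eqn:lagrange-diff-3}.
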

\begin{proof}
According to the backward induction technique, each EU $i\in\mathcal{N}$ decides on their energy trading parameters $e_i~\forall i$ at the second stage of the game to maximize \eqref{eqn:obj-eu}. It is a strictly concave function of $e_i$ as $\frac{\delta^2U_i}{\delta c_i^2} = -2\alpha_i$ and $\alpha_i>0$. Hence, EU's decision making problem has a unique solution. Furthermore, in the first stage of the game, the SFA optimizes its price $c_i$ to pay to each EU $i$. Now, we note that if $k=2$ in \eqref{eqn:cost}, which is a general choice of quadratic cost function for electricity utility companies and controllers~\cite{Tushar-TSG:2014,Rad-JTSG:2010},  the cost function \eqref{eqn:cost} is strictly convex with respect to $c_i$. Thus, for the amount of energy offered by each EU, the choice of different price to pay to each $i$ also possesses a unique solution, which minimizes \eqref{eqn:obj-sfa}. Hence, the game $\Omega$ possesses a unique SPE, and thus Theorem~\ref{thm:theorem-1} is proved.
\end{proof}
\subsection{Analysis of Energy Trading Behavior}
In this section, we show how the energy trading behavior of the SFA and EUs are affected by different decision making parameters such as the price set by the SFA, and the inconvenience that is caused to each EU for trading its energy. First, we consider the second stage of the game where each EU $i$ plays its best response to the price $c_i$ offered by the SFA. Since the utility function in \eqref{eqn:utility} is differentiable, we obtain the first order derivative $\frac{\delta U_i}{\delta e_i}$, and $U_i$ attains its maximum when $\frac{\delta U_i}{\delta e_i}=0$. Therefore, from \eqref{eqn:utility}, the best response function of EU $i$ to a given $c_i$ can be expressed as
\begin{eqnarray}
e_i^*(c_i) = \frac{c_i + E_i}{2\alpha_i},\label{eqn:relation-energy-price}
\end{eqnarray}
which leads to the following proposition:
\begin{proposition}
For an offered price $c_i$, the amount of energy $e_i$ that an EU $i$ is willing to sell, from its available energy $E_i$, to the SFA decreases with the increase of its sensitivity to inconvenience $\alpha_i$. In other words, an EU with inconvenience parameter $\alpha_i$ would be more willing to sell its energy to the SFA for a higher price per unit of energy.
\end{proposition}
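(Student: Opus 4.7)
The plan is to prove this proposition directly from the closed-form best-response expression \eqref{eqn:relation-energy-price}, which was already derived via the first-order optimality condition in the second stage of the game. Since $e_i^*(c_i) = (c_i + E_i)/(2\alpha_i)$ is a smooth function of both $\alpha_i$ and $c_i$ on the admissible domain (where $\alpha_i > 0$, $c_i \geq c_{\text{min}} > 0$, and $E_i > 0$), the monotonicity claims in the two sentences of the proposition can be established by inspecting the signs of the relevant partial derivatives.

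First I would verify the claim about $\alpha_i$. Differentiating $e_i^*$ with respect to $\alpha_i$ gives
\begin{eqnarray}
\frac{\partial e_i^*}{\partial \alpha_i} = -\frac{c_i + E_i}{2\alpha_i^2},
\end{eqnarray}
and since the numerator $c_i + E_i$ is strictly positive and the denominator $2\alpha_i^2$ is strictly positive, this partial derivative is strictly negative. Hence, holding $c_i$ and $E_i$ fixed, $e_i^*$ is strictly decreasing in $\alpha_i$, which is exactly the first statement.

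Next I would address the second sentence, which asserts that a higher price per unit is needed to coax the same quantity of energy from an EU with larger $\alpha_i$. The cleanest way is to differentiate with respect to $c_i$, obtaining
\begin{eqnarray}
\frac{\partial e_i^*}{\partial c_i} = \frac{1}{2\alpha_i} > 0,
\end{eqnarray}
so $e_i^*$ is strictly increasing in $c_i$. Combining this with the previous step, I would note that to maintain a given target level of $e_i$ as $\alpha_i$ increases, $c_i$ must increase as well; equivalently, inverting \eqref{eqn:relation-energy-price} yields $c_i = 2\alpha_i e_i - E_i$, which is manifestly increasing in $\alpha_i$ for fixed $e_i$. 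This formalizes the intuition that more inconvenience-sensitive EUs require higher unit prices to supply the same amount of energy.

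There is really no substantive obstacle here: both assertions reduce to one-line sign checks on partial derivatives of an explicit expression, and the positivity assumptions on $\alpha_i$, $E_i$, and $c_i$ are already part of the model. The only care needed is to ensure that the best-response formula \eqref{eqn:relation-energy-price} is an interior optimum (which follows from the strict concavity argument used in the proof of Theorem~\ref{thm:theorem-1}), so that the derivative-based sensitivity analysis on $e_i^*$ is actually valid rather than being obscured by boundary effects.
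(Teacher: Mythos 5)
Your proposal is correct and follows essentially the same route as the paper: the paper presents the proposition as an immediate consequence of the best-response expression $e_i^*(c_i) = \frac{c_i + E_i}{2\alpha_i}$, and your sign checks on $\partial e_i^*/\partial \alpha_i$ and $\partial e_i^*/\partial c_i$ simply make explicit the monotonicity the paper reads off from that formula. The extra care you take about the interior-optimum justification (via strict concavity of $U_i$ in $e_i$) is consistent with the argument the paper gives in the proof of Theorem~1.
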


The SFA's cost, on the other hand, is determined by the price $c_i^*$ that it wants to pay to each EU $i$ for its offered energy $e_i^*$. Therefore, in the first stage of the game the SFA determines the price $c_i^*~\forall i$ having the knowledge of the energy vector $\mathbf{e} = [e_1^*, e_2^*, \hdots, e_N^*]$ offered by all EUs via \eqref{eqn:relation-energy-price}. Now, the Lagrangian for the SFA's optimization problem in \eqref{eqn:obj-sfa} is given by
\begin{eqnarray}
\Gamma = \sum_i \left(e_i^*c_i^{k}+a_ic_i+b_i\right)&+&c_g(E_r-\sum_i e_i^*)\nonumber\\ &+&\lambda(C-\sum_i c_i),
\label{eqn:lagrange-approach}
\end{eqnarray}
where $\lambda$ is the Lagrange multiplier and
\begin{eqnarray}
\frac{\delta\Gamma}{\delta c_i} = 0.\label{eqn:lagrange-diff}
\end{eqnarray}
 In \eqref{eqn:lagrange-approach}, we only consider the case when $c_\text{min}\leq c_i\leq c_\text{max}$, i.e., the Lagrange multiplier associated with $c_\text{min}$ and $c_\text{max}$ are assumed to be zero\footnote{The conditions for $c_i^* = c_\text{min}$ and $c_i^* = c_\text{max}$ are considered at the solution of the $c_i$ in \eqref{eqn:price-final}.}. Now replacing the value $e_i^*$ in \eqref{eqn:lagrange-approach} from \eqref{eqn:relation-energy-price}, \eqref{eqn:lagrange-diff} can be expressed as
 \begin{eqnarray}
 \frac{k+1}{2\alpha_i}c_i^k + \frac{kE_i}{2\alpha_i}c_i^{k-1} + a_i - \frac{c_g}{c_n}-\lambda = 0.
 \label{eqn:lagrange-diff-2}
 \end{eqnarray}
 Now, for the general case\footnote{We will consider $k=2$ for the rest of the paper.} $k = 2$,
 \begin{eqnarray}
 3c_i^2 + 2E_ic_i + 2\alpha_i(a_i-\lambda) - c_g = 0,
 \label{eqn:lagrange-diff-3}
 \end{eqnarray}
 and consequently,
 \begin{eqnarray}
 c_i = \frac{-E_i + \left[E_i^2 - 3\left(2\alpha_i(a_i-\lambda)-c_g\right)\right]^{\frac{1}{2}}}{3}.\label{eqn:price-closed-form}
 \end{eqnarray}
In \eqref{eqn:price-closed-form}, $\lambda$ and $a_i~\forall i$ are design parameters, and thus constant for a particular system. $\lambda$ needs to be chosen  significantly higher than $a_i~\forall i$ such that $c_i$ always possesses a positive value. Note that we skip the other solution of $c_i$ in \eqref{eqn:price-closed-form} for the same reason.

From \eqref{eqn:price-closed-form}, we note that for the same generation and grid price, \emph{a higher price needs to be paid to an EU $i$ with higher inconvenience parameter $\alpha_i$ compared to an EU $j,~j\not= i$ with $\alpha_j<\alpha_i$ to encourage it to sell energy}. Nevertheless, in cases when $c_i>c_\text{max}$ and $c_i<c_\text{min}$, the SFA sets $c_i$ to the respective limits. Hence, the choice of price by the SFA to pay to each EU $i$ at the SPE can be expressed as
\begin{eqnarray}
c_i^* = \begin{cases}
c_\text{min}, & \text{$c_i<c_\text{min}$}\\
\frac{-E_i + \left[E_i^2 - 3\left(2\alpha_i(a_i-\lambda)-c_g\right)\right]^{\frac{1}{2}}}{3}, & \text{$c_\text{min}\leq c_i\leq c_\text{max}$}\\
c_\text{max}, & \text{$c_i>c_\text{max}$}
\end{cases}.
\label{eqn:price-final}
\end{eqnarray}
\section{Case Study}\label{sec:numerical-experiment}
\begin{figure}[b!]
\centering
\includegraphics[width=\columnwidth]{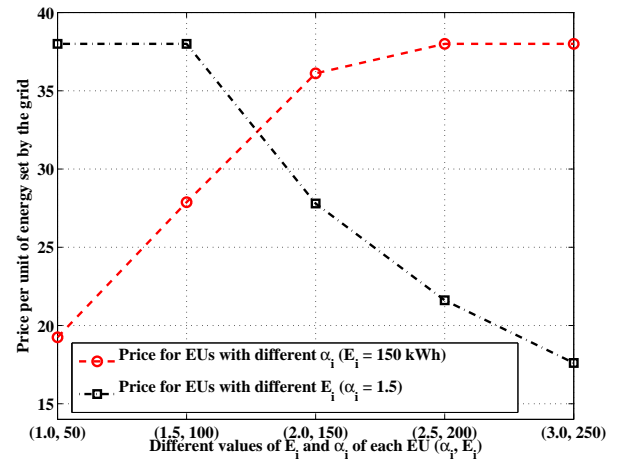}
\caption{Effect of available energy and inconvenience parameters on the price per unit of energy that the SFA selects to pay to each EU. The first term of each tuple on the horizontal axis refers to the inconvenience parameter $\alpha_i$ of an EU, and the second term indicates the available energy $E_i$.} \label{fig:figure-5}
\end{figure}

To show the properties of the proposed discriminate pricing scheme, we consider an example in which a number of EUs are interested in trading their energy with the SFA in the time slot of interest. We assume that the available energy to each EU, after meeting its essential load, is uniformly distributed within [50,~250], and the energy required by the SFA for the considered time slot is $650$ kWh. The value of $\lambda$ is chosen to be $1000$. The grid's selling price is set to be $50$ cents/kWh, and $c_\text{max}$ and $c_\text{min}$ are assumed to be $38$ and $10$ cents/kWh\footnote{Price $c_\text{min}$ is marginally greater than the price of $8.45$ cents/kWh that a grid typically pays to buy energy from DERs~\cite{2012-solarchoice}.} respectively. These two values are chosen such that the SFA can pay to each EU a price, which is lower than the grid's selling price, and at the same time is higher than the grid's buying price. This condition is necessary to motivate all EUs to trade their energy only with the SFA instead of the grid. Nonetheless, we highlight that all parameter values are chosen particularly for this case study only and that these values may vary between different case studies.

In Fig.~\ref{fig:figure-5}, we show how the price per unit of energy is decided by the SFA for each EU. According to \eqref{eqn:price-final}, for a particular grid price $c_g$, the unit price $c_i$ that the SFA pays to each EU $i$ depends on 1) EU's inconvenience parameter $\alpha_i$, and 2) the available energy $E_i$ to each EU. First, we consider five EUs with the same $E_i = 150$ kWh, but with different inconvenience in selling their energy to the SFA. We note that the SFA tends to pay more, within the constraint in \eqref{eqn:const-sfa}, to the EU with higher sensitivity to inconvenience. In fact, a higher inconvenience parameter refers to the state at which trading energy with the SFA is not a convenient option for an EU. Therefore, to encourage the EU to sell the energy the SFA needs to increase its unit price to pay. However, if $c_i$ becomes more than $c_\text{max}$, the SFA pays $c_\text{max}$ to the EU as shown in the case of the last EU with inconvenience parameter $\alpha_i=3$ in Fig.~\ref{fig:figure-5}.

By contrast, for the same sensitivity to inconvenience, the SFA pays a higher price to an EU with lower available energy and vice versa. In fact, a lower available energy could stop an EU from selling the energy to the SFA as it might not bring significant benefit to the EU at a lower price. Hence, to provide more incentive to the EU, the SFA needs to pay a relatively higher price per unit of energy. However, EUs with larger amount of energy can still obtain higher utilities from trading a considerable amount of energy with the SFA even at a relatively lower price, as explained by the example in Table \ref{table:motivation}. Thus, the SFA pays comparatively a lower price to such EUs to minimize the cost of energy trading, such that the energy trading does not effect their utilities significantly\footnote{We note that  the lowest price per unit of energy $c_\text{min}$ is assumed to be higher than the buying price of the grid. Therefore, any EU with a higher available energy would benefit more from trading with the SFA instead of trading with the grid.}.

After showing how prices are set by the SFA to pay to different EUs, we now show the effect of different behavior of EUs in a group on the total cost to the SFA. First we note that EUs' behaviors are dominated by their inconvenience parameters $\alpha$. For example, if an EU with very large available energy does not want to sell its energy at the offered price it can set its $\alpha$ high and thus insignificantly (or, not at all) take part in energy trading. Hence, we can model different EUs behaviors by simply changing their $\alpha_i~\forall i$.  To this end, we assume a network with $10$ EUs that have the same available energy $150$ kWh but different inconvenience parameters to sell their energy to the SFA. For this particular case, we consider total unit energy price $C = 380$ cents/kWh so that even when all EUs are paid at $c_\text{max}$, the constraints in \eqref{eqn:const-sfa} are still satisfied, and the unit price for each EU remains lower than the grid's selling price. We compare the performance with an equal distribution scheme (EDS) such as in \cite{Wayes-J-TSG:2012}, where $C$ is equally divided to pay to each EU for buying its energy. That is in EDS each EU is paid a price\footnote{For $N=10$, each EU is paid a price $38$ cents/kWh.} $\frac{C}{N}$ per unit of energy, where $N$ is the total EUs in the network.

\begin{table}[h!]
\centering
\caption{Different behavioral cases of EUs in the network (a total of 10 EUs) -- where the number of EUs with a particular inconvenience parameter $\alpha_i \in \{1,2,3\}$ is specified.}
\begin{tabular}{|c|c|c|c|}
\hline
Cases & $\alpha_i = 1$ & $\alpha_i = 2$ & $\alpha_i = 3$\\
\hline
1 & 6 EUs & 2 EUs & 2 EUs\\
\hline
2 & 4 EUs & 3 EUs & 3 EUs\\
\hline
3 & 2 EUs & 4 EUs & 4 EUs\\
\hline
4 & 2 EUs & 2 EUs & 6 EUs\\
\hline
5 & 1 EU & 1 EU & 8 EUs\\
\hline
6 & 0 & 0 & 10 EUs \\
\hline
\end{tabular}
\label{table:1}
\end{table}

\begin{figure}[t!]
\centering
\includegraphics[width=\columnwidth]{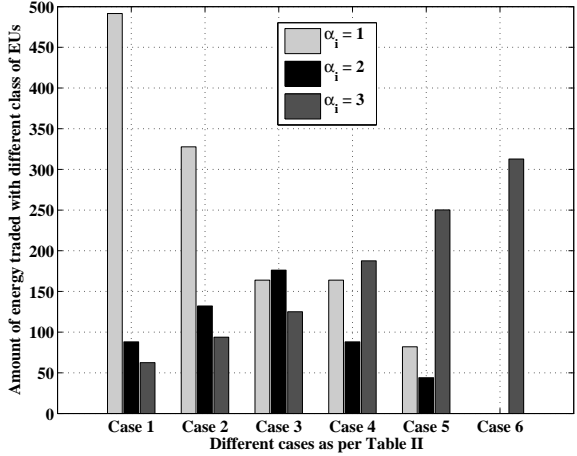}
\caption{Effect of behavior of EUs on the total amount of energy that the SFA buys from each class of EUs.} \label{fig:figure-6}
\end{figure}

To that end, we categorize the behavior of EUs into six different cases based on the number of EUs with particular inconvenience parameters $\alpha_i$ in the group as shown in Table~\ref{table:1}. Although we have chosen only three integer values of $\alpha_i\in\{1, 2, 3\}$, other fractional values within this range are equally applicable to define different levels of sensitivity to inconvenience. Now first we see from Fig.~\ref{fig:figure-6} that as the number of EUs with $\alpha_i = 1$ dominates the group, the SFA buys most of its energy from them. For example, in case 1 and case 2, the number of EUs with $\alpha_i=1$ is higher in the system and consequently, the SFA buys significantly large amount of energy from them in these two cases compared to the other cases, as shown in Fig~\ref{fig:figure-6}. However, as their number reduces the SFA needs to buy more energy from the other two types of EUs, based on their percentage of presence in the group, with relatively higher payment. In the extreme case, i.e., case 6, the SFA needs to buy all its energy from EUs with $\alpha_i = 3$ as there are no other types of EUs in the system. Consequently, this trend of energy trading affects the total cost to the SFA from buying its energy from EUs and the grid. We show these effects separately in Table \ref{table:2}.
\begin{table*}[t!]
\centering
\caption{Cost to the SFA in dollars for different EUs' behaviors (cases stated in Table \ref{table:1}).}
\begin{tabular}{|c|c|c|c|c|c|c|}
\hline
Different Costs & Case 1 & Case 2 & Case 3 & Case 4 & Case 5 & Case 6\\
\hline
Cost for buying from EUs with $\alpha_i = 1$ & 68.92 & 45.94 & 22.97 & 22.97 & 11.48 & 0\\
\hline
Cost for buying from EUs with $\alpha_i = 2$ & 23.6 & 35.4 & 47.2 & 23.6 & 11.8 & 0 \\
\hline
Cost for buying from EUs with $\alpha_i = 3$ & 24.36 & 38.4 & 48.72 & 73.08 & 97.44 & 121.8\\
\hline
Cost for buying from the grid & 53.91 & 98.16 & 142.42 & 155.18 & 186.88 & 218.58\\
\hline
Total cost for proposed scheme & 170.79 & 216.06 & 261.32 & 274.89 & 307.62 & 341\\
\hline
Total cost for EDS & 341 & 341 & 341 & 341 & 341 & 341\\
\hline
$\%$ reduction in total cost & $49.91\%$ & $36.63\%$ & $23.36\%$ & $19.38\%$ & $9.78\%$ & $0\%$\\
\hline
\end{tabular}
\label{table:2}
\end{table*}

From Table \ref{table:2}, first we note that the amount of energy that the SFA buys from the grid increases as the categories of EUs change from case 1 to case 6 in the system. This is due to the fact that as the number of EUs with higher sensitivity to inconvenience increases in the group, the total amount of power that the SFA can trade with EUs becomes lower. Hence, the SFA needs to procure the remainder of required energy from the grid at a higher price. Secondly, the cost to the SFA to buy energy from EUs with higher inconvenience parameters also increases its cost significantly as the SFA needs to pay a higher price to them. For example, consider the different cost that is incurred to the SFA for buying energy from different types of EUs in case $1$. From Fig~\ref{fig:figure-6}, we can see that the amount of energy that the SFA buys from EUs with $\alpha_i = 1$ is almost five times the amount it buys from EUs with $\alpha_i = 2$ and $3$. However, the resultant cost is only three times more than the cost to buy from EUs with higher sensitivity. Therefore, more EUs with lower sensitivity to inconvenience allows the SFA to procure more energy at a comparatively lower cost. Therefore, the total cost incurred by the SFA increases significantly with an increase in the number of EUs with higher inconvenience parameters as can be seen from Table \ref{table:2}.

We also compare the total cost that is incurred to the SFA with the case when the SFA adopts an EDS scheme for energy trading in Table \ref{table:2}. In an EDS scheme, the cost to the SFA remains the same for all type of EU groups as the cost does not depend on their categories. From Table \ref{table:2}, the proposed scheme shows considerable benefit for the SFA in terms of reduction in total cost when there are a relatively higher number of EUs with lower inconvenience parameters in the group. For example, as shown in Table~\ref{table:2}, the cost reduction for the SFA is $49.9\%$ and $36.63\%$ respectively for case 1 and 2.  According to the current case study, the average total cost reduction for the SFA is $23.18\%$ compared to the EDS. However, the cost increases with the increase of number of EUs with high inconvenience parameters, and becomes the same as the EDS scheme when all the EUs in the group become highly sensitive to the inconvenience of energy trading, i.e., $\alpha_i = 3,~\forall i$ as can be seen from Table \ref{table:2}.

\section{Conclusion}\label{sec:conclusion}
In this paper, a discriminate pricing scheme has been studied to counterbalance the inconvenience experienced by energy users  (EUs) with distributed energy resources (DERs) in trading their energy with other entities in smart grids. A suitable cost function has been designed for a shared facility authority (SFA) that can effectively generate different prices per unit of energy to pay to each participating EU according to an inconvenience parameter for the EU. A two-stage Stackelberg game, which has been shown to have a unique sub-game perfect equilibrium, has been proposed to capture the energy trading between the SFA and different EUs. The properties of the scheme have been studied at the equilibrium by using a backward induction technique. A theoretical price function has been derived for the SFA to decide on the price that it wants to pay to each EU, and the properties of the scheme are explained via numerical case studies. By comparing with an equal distribution scheme (EDS), it has been shown that discriminate pricing gives considerable benefit to the SFA in terms of reduction in total cost. One interesting future extension of the proposed scheme would be to design an algorithm that can capture the decision making process of the SFA and EUs in a distributed fashion. Also, finding a mathematical theorem that would explain the benefits to the SFA due to the discriminate pricing scheme is another possible extension of this work. Finally, the design of a scheme (i.e., game) with imperfect information about the inconvenience parameters also warrants future investigation.


\end{document}